\setlist{itemsep=-2pt,topsep=2pt}
\theoremstyle{plain}
\newtheorem{theorem}{Theorem}[section]
\newtheorem{lemma}[theorem]{Lemma}
\theoremstyle{definition}
\newtheorem{definition}[theorem]{Definition}
\theoremstyle{remark}
\newcommand{\set}[2]{\left\{#1\mathrel{\left|\vphantom{#1}\vphantom{#2}\right.}#2\right\}}
\newcommand{\smallset}[1]{\left\{#1\right\}}
\newcommand*{\words}[1][\Lambda]{#1{}^{\mathsf{LO}}}
\newcommand*{\terms}[1][\Lambda]{T_{#1}}
\newcommand*{\enc}[1]{\langle #1\rangle}
\newcommand*{\Fragm}{\mathcal{F}}
\newcommand*{\FragmG}{\mathcal{G}}
\newcommand*{\FO}{\mathsf{FO}}
\newcommand*{\Varset}[1][X]{\mathbb{#1}}
\newcommand*{\Vars}{\Varset[V]}
\newcommand*{\Nat}{\mathbb{N}}
\newcommand*{\Rat}{\mathbb{Q}}
\newcommand*{\Int}{\mathbb{Z}}
\newcommand*{\Quantifiers}{\mathcal{Q}}
\newcommand*{\quant}{\mathsf{Q}}
\newcommand*{\dis}[3]{#3/#1 #2}
\newcommand*{\dom}[1]{\operatorname{dom}(#1)}
\newcommand*{\repl}[3]{#1[#2/#3]}
\newcommand*{\fv}{\operatorname{F\kern-0.25emV}}
\newcommand*{\qd}{\operatorname{qd}}
\newcommand*{\theLO}{\varrho}
\newcommand*{\twoVarLO}{{\sigma}}
\newcommand*{\twoVarLOText}{$\sigma$}
\newcommand*{\move}[5]{#1[#2#3,#4,#5]}
\newcommand*{\Dwins}[1]{\mathrel{\lesssim_{#1}}}
\newcommand*{\Swins}[1]{\mathrel{\not\lesssim_{#1}}}
\newcommand*{\evalt}[2]{\llbracket #1\rrbracket_{#2}}
\newcommand*{\ord}{\operatorname{ord}}
\newcommand*{\EF}{Ehrenfeucht-Fra\"iss\'e\xspace}
\newcommand*{\para}[1]{\medskip\par\noindent{\it\normalsize #1}}
\newcommand*{\lpara}[2]{\para{#1:} #2.\quad}
\newcommand*{\basecase}[1]{\lpara{Base case}{#1}}
\newcommand*{\inductivestep}[1]{\lpara{Inductive step}{#1}}
\newcounter{casecounter}
\newcounter{subcasecounter}
\newcommand*{\resetcases}{\setcounter{casecounter}{0}}
\newcommand*{\case}[1]{\stepcounter{casecounter}\setcounter{subcasecounter}{0}\lpara{Case \arabic{casecounter}}{#1}}
\newcommand*{\refenum}[1]{{#1.{}}}
\begin{document}

\title{Two-Variable Ehrenfeucht-Fra\"iss\'e Games over Omega-Terms}

\author{Manfred Kuf\-leitner \and Jan Philipp W\"achter}

\date{University of Stuttgart, FMI}

\maketitle

\begin{abstract}
  \noindent
  \textsf{\textbf{Abstract.}} \ 
   Fragments of first-order logic over words can often be characterized
  in terms of finite monoids, and identities of omega-terms are an effective mechanism for specifying classes of monoids. Huschenbett and the first author have shown how to use infinite \EF games on linear orders for showing that some given fragment satisfies an identity of omega-terms (STACS 2014). After revisiting this result, we show that for two-variable logic one can use simpler linear orders.
\end{abstract}

\section{Introduction}

As many fragments of first-order logic correspond to sets of finite monoids, the study of such
fragments often results in the study of algebraic properties of these monoids. A recurring problem
in this connection is the question whether a given identity of omega-terms $u = v$ holds. The 
omega-terms $u$ and $v$ in the identity are finite words over an alphabet of variables with an
additional $\omega$-power. This $\omega$-power is usually interpreted as a mapping of monoid
elements which maps the element $m$ of a finite monoid $M$ onto its generated idempotent 
$m^\omega$, i.\,e.~the element $e = m^k \in M$ for the smallest positive number $k$ such that $e =
e^2$, which is known to exist for any element in a finite monoid. With this interpretation the
identity $u = v$ holds in $M$ if $u$ and $v$ yield the same element for any assignment of the
variables occurring in $u$ and $v$ onto elements of $M$ and later interpretation of the 
concatenation with $M$'s binary operation.

To avoid confusion with the second historical meaning of $\omega$ to denote the linear order type
of all natural numbers, omega-terms will be called $\pi$-terms in this paper. This notation was introduced by Perrin and Pin~\cite{pp04:short} in the context of infinite words. Of course, the $\omega$-power will also be called $\pi$-power instead.

Often one can use \EF games in order to show that a certain identity holds in all monoids belonging
to a certain first-order logic fragment. This is usually done by describing a winning strategy for
Duplicator in the $n$-round \EF game for an arbitrary $n$. This approach usually involves a certain 
amount of book-keeping to keep track of the remaining rounds. M.~Huschenbett and the first author
presented a way to avoid this book-keeping by playing infinite long \EF games \cite{HuschenbettKufleitner14stacs:short}.
Here, the $\pi$-power gets replaced by a certain linear order type, namely $\theLO = \omega + 
\zeta \cdot \eta + \omega^*$, where $\omega$ denotes the linear order type of the natural number,
$\omega^*$ its dual, $\zeta$ that of all integers and $\eta$ that belonging to the rational numbers.
This leads to the notion of generalized words which are not necessarily finite anymore.

In this paper we are first going to revisit this result and then have a further look on how the
linear ordering can be simplified when we restrict our consideration onto fragments of first-order
logic with only two variables.

\section{Preliminaries}

\subparagraph*{Generalized Words.}
A central notion for this paper is that of generalized words. We therefore fix an alphabet
$\Lambda$. A 
\emph{generalized word} $w$ over $\Lambda$ is a triple $w = (P_w, \leq_w, l_w)$ where $P_w$ is
a set of \emph{positions}, $\leq_w$ is a linear order relation on $P_w$ (i.\,e.~a total, reflexive,
transitive and anti-symmetric binary relation), and $l_w: P_w \to \Lambda$ is a labeling map.
The set $P_w$ of positions of $w$ is called the \emph{domain} of $w$, denoted by $\dom{w} = P_w$.
To simplify notation, we write $w(p)$ instead of $l_w(p)$ for a position $p \in P_w$. The
\emph{order type} of $w$ is the isomorphism\footnote{Two ordered sets are \emph{isomorphic} if
there is an order-preserving bijection between them.} type of the linear ordered set $(P_w, 
\leq_w)$. Any finite word $u = a_1 a_2 \dots a_n$ of length $n \in \Nat$ in the set of finite words 
over $\Lambda$, denoted by $\Lambda^*$, can be regarded as a generalized word: its domain is the set
$\dom{u} = \smallset{1, 2, \dots, n}$, the order relation is the natural order, and the labeling
map labels position $i \in \dom{u}$ by $a_i$. We therefore see finite words as generalized words
in the following, and refer by \enquote{word} to generalized words instead of finite words.
We follow the custom of identifying the order type of any finite word of length $n \in \Nat$
(i.\,e.~the order isomorphism type of the naturally ordered set $\smallset{1, 2, \dots, n}$) with
$n$.

We consider two (generalized) words $u$ and $v$ \emph{isomorphic}, if there is an isomorphism 
$\iota: \dom{u} \to \dom{v}$ of linear ordered sets which is compatible with the labeling, 
i.\,e.~for which $u(p) = v(\iota(p))$ holds for all $p \in \dom{u}$. We do not distinguish between
isomorphic words anymore, and denote the set of (isomorphism classes of) \emph{countable} words 
by $\words$; the $\mathsf{LO}$ is for \emph{linear order}. As mentioned above, we consider 
$\Lambda^*$ as a subset of $\words$.

Given a linear ordered set $(T, \leq_T)$ of isomorphism type $\tau$ we can create the $\tau$-power
$w^\tau$ of any word $w \in \words$: the domain of $w^\tau$ is $\dom{w} \times T$, the linear
order relation $\leq_{w^\tau}$ is defined by $(p, t) \leq_{w^\tau} (p', t')$ if $t <_T t'$ 
(i\,e.~$t \leq_t t'$ and $t \neq t'$) or if $t = t'$ and $p \leq_w p'$, and the labeling is given 
by $w^\tau(p, t) = w(p)$. One should observe that this coincides with the usual $n$-power for
a natural number $n \in \Nat$ over finite words.

\subparagraph*{Logic over Words.}
Next, we give a brief overview of logic over words. We fix a countably infinite $\Vars = 
\smallset{x, y, z, \dotsc}$ of first-order variables. A first-order logical formula over words is
syntactically defined by
\begin{align*}
	\varphi &\coloncolonequals \top \mid \bot \mid x = y \mid \lambda(x) = a \mid
			x < y \mid x \leq y \mid 
	 \neg \varphi \mid \varphi \lor \varphi \mid \varphi \land \varphi \mid
			\exists x\, \varphi \mid \forall x\, \varphi \,,
\end{align*}
with $x, y \in \Vars$ and $a \in \Lambda$. The set of all these formulae is $\FO$. A \emph{sentence}
is a formula $\varphi \in \FO$ with $\fv(\varphi) = \emptyset$ where $\fv(\varphi)$ denotes the set
of free variables in $\varphi$ which are defined as usual.

For the semantics of first-order formulae we define an \emph{$\Varset$-valuation} for a finite 
subset $\Varset \subseteq \Vars$ on a word $w \in \words$ as the pair $\enc{w, \alpha}$ where 
$\alpha: \Varset \to \dom{w}$ is an arbitrary map. Whether the $\Varset$-valuation satisfies a 
formula $\varphi \in \FO$ with $\fv(\varphi) \subseteq \Varset$ depends on the following 
interpretations:
\begin{itemize}
  \item Variables get values in $\dom{w}$; free variables are interpreted according to $\alpha$.
  \item $\top$ is always satisfied, and $\bot$ is never satisfied.
  \item The logical operations $\neg$, $\lor$, and $\land$ are interpreted as usual, just
        like the quantifiers $\exists$ and $\forall$.
  \item The binary operators of comparison $=$, $<$ and $\leq$ are interpreted according to the
        ordering of $\dom{w}$ with respect to $w$'s order type.
  \item $\lambda$ gets interpreted by the labeling map $l_w$ of $w$.
\end{itemize}
If $\enc{w, \alpha}$ satisfies a formula $\varphi \in \FO$ it is called a \emph{model} and we 
write $\enc{u, \alpha} \models \varphi$. Any word $w \in \words$ is identified by $\enc{w, 
\emptyset}$, which is unique. This allows for the notation $w \models \varphi$ for a first-order
sentence $\varphi$.

\subparagraph*{Fragments.} 
The first author and A.~Lauser proposed a formal definition of logical fragments as sets of
formulae satisfying certain closure properties \cite{KufleitnerL12icalp:short}. For this paper
(just like in \cite{HuschenbettKufleitner14stacs:short}) we are going to use a slightly modified 
version of this idea. A first-order formula $\mu$ in which a special additional constant predicate
$\circ$ appears exactly once is called a \emph{context}. The predicate $\circ$ is intended as a
placeholder which can be replaced by other first-order formulae. The result of replacing $\circ$
in a context $\mu$ by a formula $\varphi \in \FO$ is denoted by $\mu(\varphi)$. The notion of a
context now allows for the formal definition of a fragment.

\begin{definition}
\label{def:fragment}
A \emph{fragment} is a non-empty set of formulae $\Fragm \subseteq
\FO$ such that for all contexts $\mu$, formulae $\varphi,\psi \in
\FO$, $a \in \Lambda$ and $x,y \in \Vars$ the following conditions are
satisfied:
\begin{enumerate}
\item If $\mu(\varphi) \in \Fragm$, then $\mu(\top) \in \Fragm$,
  $\mu(\bot) \in \Fragm$, and $\mu(\lambda(x) = a) \in \Fragm$.
\item $\mu(\varphi\lor\psi) \in \Fragm$ if, and only if, $\mu(\varphi)
  \in \Fragm$ and $\mu(\psi) \in \Fragm$.
\item $\mu(\varphi\land\psi) \in \Fragm$ if, and only if,
  $\mu(\varphi) \in \Fragm$ and $\mu(\psi) \in \Fragm$.
\item \label{cond:negation} If $\mu(\neg\neg\varphi) \in \Fragm$, then
  $\mu(\varphi) \in \Fragm$.
\item If $\mu(\exists x\,\varphi) \in \Fragm$ and $x \not\in
  \fv(\varphi)$, then $\mu(\varphi) \in \Fragm$.
\item \label{cond:forall} If $\mu(\forall x\,\varphi) \in \Fragm$ and
  $x \not\in \fv(\varphi)$, then $\mu(\varphi) \in \Fragm$.
\item $\mu(x < y) \in \Fragm$ if, and only if, $\mu(x \leq y) \in
  \Fragm$.
\end{enumerate}
\end{definition}

\noindent
These closure properties are satisfied by many classes of formulae which arise naturally in the
study of first-order logic, such as $\FO$ itself, the set of first-order formulae with limited 
quantifier alternations (e.\,g.~$\Sigma_n$ and $\Pi_n$), and the set of those whose number of 
variables is bounded by a natural number $m$ (denoted by $\FO^m$).

We define the \emph{quantifier depth} $\qd(\varphi)$ of a formula $\varphi \in \FO$ in the usual 
manner and say that a fragment $\Fragm$ has \emph{bounded quantifier depth} if there is an $n \in
\Nat$ such that $\qd(\varphi) \leq n$ for all $\varphi \in \Fragm$. For any fragment $\Fragm$ and
any natural number $n$ we define $\Fragm_n = \set{\varphi}{\varphi \in \Fragm, \qd(\varphi) \leq
n}$ as the corresponding fragment with bounded quantifier depth $n$. One should verify that
$\Fragm_n$ indeed is a fragment. An example of such a fragment is $\FO_n$, the set of first-order
formulae whose quantifier depth is bounded by $n$.

\section{Ehrenfeucht-Fra\"iss\'e Games}
\label{sec:EF_game}

In this section we will have a closer look on \EF games. These games can often be used to show 
whether a given fragment of first-order logic can distinguish two words, i.\,e.~whether there is
a formula in the fragment for which one word is a model while the other is not.

We define the set of quantifiers as $\Quantifiers = \{\exists,\forall,\neg\exists,\neg\forall\}$.
Note, that we also consider the \enquote{negated quantifiers} $\neq\exists$ and $\neq\forall$ as
quantifiers. Let $\Fragm$ be a fragment, $x \in \Vars$ and $\quant \in \Quantifiers$, then the
\emph{reduct} of $\Fragm$ by $\quant x$ is the set
$$
  \dis\quant x\Fragm = \set{ \varphi \in \FO }{ \quant x\,\varphi \in \Fragm } \,.
$$
Note that a reduct of a fragment is either empty or a fragment itself.

We call the \EF game which is played over the fragment $\Fragm$ and on the words $u, v \in \words$
the \emph{$\Fragm$-game on $(u, v)$}. Configurations $(\FragmG, \enc{u, \alpha}, \enc{v, \beta})$ 
of such a game consist of
\begin{enumerate}
  \item an iterated, non-empty reduct $\FragmG$ of $\Fragm$,
  \item a $\Varset$-valuation $\enc{u, \alpha}$, and
  \item a $\Varset$-valuation $\enc{v, \beta}$
\end{enumerate}
for a \emph{finite} subset $\Varset \subseteq \Vars$ of variables. Such a configuration is also
called an \emph{$\Varset$-configuration} of the game.

In the beginning the game is in the $\emptyset$-configuration $(\Fragm, u, v)$. The game can be
played any number -- even an infinite number -- of rounds. In each round a variable gets assigned
a new value; this variable can either be a variable which so far had no value or a variable which
already has one. In more detail: suppose the game is in the $\Varset-$configuration $S = (\FragmG, 
\enc{u, \alpha}, \enc{v, \beta})$. One round consists of two steps:
\begin{enumerate}
  \item The first player, called \emph{Spoiler}, chooses
        \begin{itemize}
          \item a quantifier $\quant \in \Quantifiers$,
          \item a variable $x \in \Vars$ such that $\dis\quant x\FragmG \not= \emptyset$, and
          \item a \emph{quest} $q$.
        \end{itemize}
        The quest is either a position in $\dom{u}$ if $\quant \in \smallset{ \exists, \neg\forall }$
        or a position in $\dom{v}$ if $\quant \in \smallset{ \forall, \neg\exists }$.
  \item The second player, called \emph{Duplicator}, replies with a position $r$ in the domain
        of the other word (with respect to the quest).
\end{enumerate}
The follow-up configuration $\move S\quant x q r$ of the game is defined in 
Table~\ref{tab:masterplan}.
\begin{table}[t]
  \centering
  \begin{tabular}{r@{{\;}={\;}}lccc} \toprule
    \multicolumn{3}{c}{\multirow{2}{*}{Spoiler}} & \multirow{2}{*}{Duplicator} & resulting configuration \\ 
    \multicolumn{2}{c}{}&&& $\move S\quant x q r$ \\ \midrule
$\quant x$ & $\exists x$     & $q \in \dom{u}$ & $r \in \dom{v}$ & $(\dis\exists       x\FragmG,\enc{u,\repl\alpha x q},\enc{v,\repl\beta x r})$ \\
$\quant x$ & $\forall x$     & $q \in \dom{v}$ & $r \in \dom{u}$ & $(\dis\forall       x\FragmG,\enc{u,\repl\alpha x r},\enc{v,\repl\beta x q})$ \\
$\quant x$ & $\neg\exists x$ & $q \in \dom{v}$ & $r \in \dom{u}$ & $(\dis{\neg\exists} x\FragmG,\enc{v,\repl\beta x q},\enc{u,\repl\alpha x r})$ \\
$\quant x$ & $\neg\forall x$ & $q \in \dom{u}$ & $r \in \dom{v}$ & $(\dis{\neg\forall} x\FragmG,\enc{v,\repl\beta x r},\enc{u,\repl\alpha x q})$ \\ \bottomrule
  \end{tabular}
  \caption{A single round of the $\Fragm$-game in configuration $S=(\FragmG,\enc{u,\alpha},\enc{v,\beta})$.}
  \label{tab:masterplan}
\end{table}
Note that the chosen variable $x$ can, but does not need to be from $\Varset$.

\newpage\noindent
The winning conditions for the two players are as follows:
\begin{itemize}
  \item If Spoiler cannot choose a quantifier and a variable such that the corresponding reduct is
        not empty (i.\,e.~$\FragmG$ does not contain any more quantified formulae), then Duplicator
        wins.
  \item If either player cannot choose a position because its word's domain is empty, the other 
        player wins.
  \item If the game reaches an $\Varset$-configuration $(\FragmG, \enc{u, \alpha}, \enc{v, \beta})$
        such that there is a literal (i.\,e.~an unquantified formula) $\varphi \in \FragmG$ with 
        $\fv(\varphi) \subseteq \Varset$ and $\enc{u, \alpha} \models \varphi$ but $\enc{v, \beta}
        \not\models \varphi$, then Spoiler wins.
  \item Duplicator wins, if Spoiler does not win. In particular Duplicators wins every game which
        goes on forever.
\end{itemize}
Whenever a player has won, the game stops immediately. Given the winning conditions either Spoiler
or Duplicator has a winning strategy on the words $(u, v)$ in the $\Fragm$-game, the game is 
\emph{determined}. For a configuration $S = (\Fragm, \enc{u, \alpha}, \enc{v, \beta})$ we write
$\enc{u, \alpha} \Dwins\Fragm \enc{v, \beta}$ if Duplicator has a winning strategy on S, or
$\enc{u, \alpha} \Swins\Fragm \enc{v, \beta}$ if Spoiler has one. One should note that 
$\Dwins\Fragm$ is a preorder, i.\,e.~a reflexive and transitive binary relation.

Suppose that the quantifier depth of all formulae in a fragment $\Fragm$ is bound by $n \in \Nat$.
Because we lose one quantifier level in very round of the game, the $\Fragm$-game can last at most
$n$ rounds. This means that the $\Fragm_n$-game can be seen as an $n$-round version of the 
$\Fragm$-game for \emph{any} fragment. We are going to have a closer look on this kind of games. The
following result from~\cite{HuschenbettKufleitner14stacs:short} is an adaption of the \EF
Theorem for fragments of bounded quantifier depth.

\begin{restatable}{theorem}{TheoremEFBounded}
\label{thm:EF_bounded}
Let $\Fragm$ be a fragment of bounded quantifier depth. For all words
$u,v \in \words$ the following are equivalent:%
\begin{enumerate}
\item $u \models \varphi$ implies $v \models \varphi$ for all
  sentences $\varphi \in \Fragm$ and
\item Duplicator has a winning strategy in the $\Fragm$-game on
  $(u,v)$.
\end{enumerate}
\par
\end{restatable}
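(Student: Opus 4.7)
The plan is to prove the more general statement that for every configuration $S = (\FragmG, \enc{u, \alpha}, \enc{v, \beta})$ of the game (an $\Varset$-configuration for some finite $\Varset \subseteq \Vars$), Duplicator wins from $S$ if and only if every $\varphi \in \FragmG$ with $\fv(\varphi) \subseteq \Varset$ satisfied by $\enc{u, \alpha}$ is also satisfied by $\enc{v, \beta}$; the theorem itself is then the special case $\FragmG = \Fragm$, $\Varset = \emptyset$. The proof proceeds by induction on the (finite) quantifier-depth bound $n$ of $\FragmG$. The base case $n = 0$ is immediate: no round can ever be started, and the game's winning rules reduce exactly to whether some literal of $\FragmG$ distinguishes the two valuations.

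For the inductive step and the direction \emph{semantic preservation $\Rightarrow$ Duplicator wins}, it suffices to show that Duplicator can keep the semantic condition invariant under a single round, after which the induction hypothesis applied to the reduct (which has qd-bound $n - 1$) finishes the play. Suppose, for instance, that Spoiler plays $\exists x$ with quest $q \in \dom{u}$. Using the standard finite-model-theory fact that, over a finite alphabet and with free variables in a fixed finite set, there are only finitely many FO-formulas of bounded quantifier depth up to logical equivalence, I would pick a finite set of representatives in $\dis\exists x \FragmG$ and form the conjunction $\chi$ of those satisfied by $\enc{u, \repl\alpha x q}$; closure of $\dis\exists x \FragmG$ under conjunction (fragment property~3 applied with the context $\mu = \exists x\,\circ$) keeps $\chi$ inside the reduct, so $\exists x\,\chi \in \FragmG$. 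Because this formula is witnessed on the $u$-side by $q$, the assumed semantic preservation supplies an $r \in \dom{v}$ with $\enc{v, \repl\beta x r} \models \chi$; by the finite-equivalence-class argument every formula in $\dis\exists x \FragmG$ satisfied by $\enc{u, \repl\alpha x q}$ is equivalent to (a conjunct of) $\chi$ and is therefore also satisfied on the $v$-side. The cases $\forall$, $\neg\exists$, and $\neg\forall$ go through symmetrically, using disjunction (fragment property~2) whenever the outer quantifier is universal, and using the fact that the negated quantifiers swap the two words in the successor configuration.

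For the converse \emph{Duplicator wins $\Rightarrow$ semantic preservation} I would take the contrapositive and build a Spoiler strategy by structural induction on a distinguishing $\varphi \in \FragmG$. The literal case is won immediately by the game's winning rule; the boolean cases isolate a strict subformula that is still distinguishing and still in $\FragmG$ (using properties~2 and~3 and, for $\neg\neg\psi'$, property~4, together with the De~Morgan contexts $\mu = \neg\,\circ$); and for $\varphi = \exists x\,\psi$ or $\varphi = \forall x\,\psi$ Spoiler plays the matching quantifier with the obvious witness as quest, so that $\psi$ stays distinguishing no matter how Duplicator replies. The characteristic cases are $\varphi = \neg\exists x\,\psi$ and $\varphi = \neg\forall x\,\psi$, where the negated quantifiers $\neg\exists, \neg\forall \in \Quantifiers$ become essential: they let Spoiler play a move that swaps the two words and lands in a configuration where $\psi$ lies in the matching reduct and still distinguishes, so the structural hypothesis closes the argument. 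I expect the main obstacle to be the construction of $\chi$ in the first direction, where three things must work at once: $\chi$ must lie in $\dis\exists x \FragmG$ (through the right instance of the fragment-closure axioms), the conjunction must be finite (via the finiteness of FO-equivalence classes), and the resulting $r$ must preserve the semantic condition for \emph{every} formula of the reduct, not only for the chosen representatives.
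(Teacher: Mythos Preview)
Your proposal is correct and follows the standard argument that the paper itself defers to: the paper does not spell out a proof of Theorem~\ref{thm:EF_bounded} but cites \cite{HuschenbettKufleitner14stacs:short} and remarks that the direction ``\refenum{2} $\Rightarrow$ \refenum{1}'' goes through without the boundedness hypothesis (exactly your structural-induction-on-a-distinguishing-formula argument), while ``\refenum{1} $\Rightarrow$ \refenum{2}'' is the place where boundedness enters (via your finite characteristic formula~$\chi$). Your careful treatment of the negated quantifiers $\neg\exists$ and $\neg\forall$, and of the fragment closure axioms needed to keep $\chi$ inside the reduct, matches what the fragment machinery is designed for. The only point to make explicit is that the ``finitely many formulas up to equivalence'' step tacitly assumes the alphabet~$\Lambda$ is finite; this is the usual setting for $\pi$-term identities and is implicit in the cited source, but it is worth stating.
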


\noindent
The proof for the implication ``\refenum{2} $\bm\Rightarrow$ \refenum{1}'' does not depend on the
bounded quantifier depth of the fragment. The implication thus holds also for the unbounded case.
For the implication ``\refenum{1} $\bm\Rightarrow$ \refenum{2}'' the boundedness, however, is 
substantial. An example for this is an \EF game on $(a^\zeta,a^{\zeta+\zeta})$ for an 
arbitrary $n \in \Nat$ where $\zeta$ denotes the natural order type of all integers. Here, 
Duplicator has a winning strategy for all $\FO_n$-games with $n \in \Nat$ which yields that
$a^\zeta \models \varphi$ implies $a^{\zeta + \zeta} \models \varphi$ for all sentences 
$\varphi \in \FO$. On the other hand, Spoiler has a winning strategy in the (infinite) $\FO$-game.

To gain a version of the theorem which does not limit the quantifier depth of 
the fragment, one has to restrict the words $u$ and $v$ to a special subset of words.
For this we define:
\begin{definition}
  \label{def:tau-rationality}
  Let $\tau$ be a linear ordering. A word $u \in
  \words$ is \emph{$\tau$-rational} if it can be constructed from
  the finite words in $\words$ using the operations of concatenation
  and $\tau$-power, only.
\end{definition}
\noindent Next we need to define some special oder types:
\begin{definition}
  The order types of the sets $\Nat$, $\Int$, $\Rat$ and $\Int_{\leq 0} = 
  -\Nat$ ordered naturally are denoted by $\omega$, $\zeta$, $\eta$ and 
  $\omega^*$. Let $a \in \Lambda$.   The order type of the word $a^\omega\bigl(a^\zeta\bigr){}^\eta a^{\omega^*}$ is $\theLO = \omega + \zeta\cdot\eta + \omega^*$,
  and $\sigma = \omega + \zeta + \omega^*$ is the order type of $a^\omega a^\zeta a^{\omega^*}$.
\end{definition}

With these definitions it is now possible to state the following 
theorem from \cite{HuschenbettKufleitner14stacs:short}.

\begin{restatable}{theorem}{TheoremEFUnbounded}
\label{thm:EF_unbounded}
Let $\Fragm$ be fragment. For all $\theLO$-rational words $u,v \in \words$ the 
following are equivalent:
\begin{enumerate}
\item $u \models \varphi$ implies $v \models \varphi$ for all
  sentences $\varphi \in \Fragm$ and
\item Duplicator has a winning strategy in the $\Fragm$-game on
  $(u,v)$.
\end{enumerate}
\par
\end{restatable}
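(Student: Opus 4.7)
The proof naturally splits along the equivalence. For the direction $2 \Rightarrow 1$, I would argue by structural induction on a sentence $\varphi \in \Fragm$ that Duplicator's winning strategy transports satisfaction from $u$ to $v$: each existential witness in $u$ becomes a Spoiler quest that Duplicator answers with a corresponding witness in $v$, and the closure properties of fragments ensure that the subformula under the quantifier lies in a reduct still covered by the same strategy. This direction uses no structural assumption on $u$ and $v$, so it works for arbitrary words (as the paper already remarks).

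Direction $1 \Rightarrow 2$ is where the $\theLO$-rationality enters. The plan is to construct Duplicator's winning strategy by maintaining the following invariant at every $\Varset$-configuration $(\FragmG, \enc{u, \alpha}, \enc{v, \beta})$ reached during play: for each $\psi \in \FragmG$ with $\fv(\psi) \subseteq \Varset$, $\enc{u, \alpha} \models \psi$ implies $\enc{v, \beta} \models \psi$. By hypothesis the invariant holds initially. If Duplicator preserves it round by round, then no distinguishing literal can ever appear, and any infinite play is a win by the rules. The difficulty is maintaining the invariant through a single round: given a Spoiler move $(\quant, x, q)$, say $\quant = \exists$ with $q \in \dom{u}$, Duplicator must choose $r \in \dom{v}$ such that for every $\psi \in \dis\exists x \FragmG$ with $\enc{u, \repl\alpha x q} \models \psi$ one has $\enc{v, \repl\beta x r} \models \psi$. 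Since $\dis\exists x \FragmG$ can contain infinitely many formulae of unbounded quantifier depth, a single $r$ must witness all these implications simultaneously.

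I would handle this by induction on the construction of $u$ and $v$ as $\theLO$-rational expressions from finite words via concatenation and $\theLO$-power. The base case of finite words and the concatenation step reduce via Theorem~\ref{thm:EF_bounded}, applied to each bounded reduct $\FragmG_n$, to matching finitely many depth-$n$ types. The critical case is $u = w^{\theLO}$, where the order type $\omega + \zeta \cdot \eta + \omega^*$ is designed precisely so that every quest can be answered: a quest in the $\omega$-prefix or $\omega^*$-suffix is matched by the position at the same finite offset in the corresponding end of the target word; a quest in the dense middle $\zeta \cdot \eta$ is answered by exploiting the density of $\eta$ and the homogeneity of the $\zeta$-blocks to pick a position whose local neighbourhood is isomorphic to that of the quest and whose relative placement against previously played points mirrors Spoiler's choice.

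The principal obstacle is passing from winning strategies in every bounded game (supplied by Theorem~\ref{thm:EF_bounded} applied to each $\FragmG_n$) to one coherent strategy in the infinite game, since a naive level-by-level compactness argument does not yield a uniform response. My plan is to exhibit a canonical choice: partition $\dom{v}$ into finitely many regions determined by the $\theLO$-rational construction of $v$ together with the positions already played, and show that the region matching Spoiler's quest contains a position realizing the same $\Fragm_n$-type for every $n$ at once. The density of $\eta$ together with the infinitude of both $\omega$ and $\omega^*$ ends ensures that such a uniformly good position always exists, so the invariant is preserved in every round and Duplicator wins.
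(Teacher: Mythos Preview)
Your direction $2 \Rightarrow 1$ is fine and matches the paper's remark that this implication requires nothing about $u$ and $v$.

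For $1 \Rightarrow 2$, however, there is a real gap. Your invariant is the right one (it is equivalent, via Theorem~\ref{thm:EF_bounded}, to the paper's notion of a \emph{good} configuration: $\enc{u,\alpha} \Dwins{\FragmG_n} \enc{v,\beta}$ for every $n$). The problem is how you propose to maintain it. You suggest a simultaneous induction on the $\theLO$-rational constructions of \emph{both} $u$ and $v$, matching a quest in the $\omega$-prefix of $u$ to ``the same finite offset'' in $v$, a quest in the dense middle of $u$ to a suitably placed point in the dense middle of $v$, and so on. But $u$ and $v$ need not have any structural correspondence at all: $u$ might be $w^{\theLO}$ while $v$ is a finite word or a concatenation of entirely different pieces. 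There is in general no ``region of $v$ matching Spoiler's quest in $u$'', and the phrase ``a position realizing the same $\Fragm_n$-type'' is ambiguous across two different structures.

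The paper's mechanism avoids comparing $u$ and $v$ structurally. Given Spoiler's quest $q \in \dom{u}$, the bounded-game strategies already supply, for each $n$, a response $r_n \in \dom{v}$ with $\enc{u,\repl\alpha x q} \Dwins{(\dis\exists x\FragmG)_n} \enc{v,\repl\beta x{r_n}}$. The crucial step is then purely internal to $v$: Lemma~\ref{lemma:limit_points} (the limit-point lemma) says that any infinite sequence in a $\theLO$-rational word has a single position $r$ such that, for every $n$, infinitely many $r_i$ satisfy $\enc{v,\repl\beta x{r_i}} \Dwins{(\dis\exists x\FragmG)_n} \enc{v,\repl\beta x r}$. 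Transitivity then gives the invariant at $r$. The induction on the $\theLO$-rational structure happens only for this lemma, only on the single word $v$, and only to locate where the sequence $(r_n)$ accumulates; the structure of $u$ is never used. This decoupling is the idea your proposal is missing.
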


\noindent
The implication ``\refenum{2} $\bm\Rightarrow$ \refenum{1}'' is a special case of the according
implication for Theorem~\ref{thm:EF_bounded} which holds -- as already mentioned -- even if one
drops the requirement of $\Fragm$ being of bounded quantifier depth. Therefore, a proof for this
direction can be obtained along the lines of a proof for the classic version of the \EF Theorem.
At the heart of the proof
for the implication \mbox{``\refenum{1} $\bm\Rightarrow$ \refenum{2}''} lies the
following lemma:
\begin{lemma}
\label{lemma:limit_points}
Let $\Fragm$ be a fragment, $x \in \Vars$ and $\enc{u,\alpha}$ an $\Varset$-valuation on a $\theLO$-rational word $u
\in \words$. For every infinite sequence $(q_i)_{i \in \Nat} \in
\dom{u}^\Nat$ there exists a position $q \in \dom{u}$ such that for
all $n \in \Nat$ there are arbitrarily large $i \in \Nat$ with
$\enc{u,\repl\alpha x{q_i}} \Dwins{\Fragm_n} \enc{u,\repl\alpha x q}$.
\end{lemma}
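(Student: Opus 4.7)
The plan is to proceed by induction on the $\theLO$-rational construction of $u$, the non-trivial case being the $\theLO$-power.

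For the base case $u$ is a finite word, so $\dom{u}$ is finite and the pigeonhole principle yields $q \in \dom{u}$ with $q_i = q$ for infinitely many $i$; for those $i$ the two valuations coincide, and Duplicator wins trivially by copying. For the concatenation $u = u_1 u_2$, some factor, say $u_1$, contains infinitely many $q_i$. Split $\alpha$ into its restrictions $\alpha_1, \alpha_2$ to the two factors, apply the induction hypothesis to $u_1$ with the restricted sequence, and invoke a standard composition lemma for EF games over concatenations: Duplicator plays her local winning strategy on $u_1$ and the identity strategy on $u_2$, answering each of Spoiler's quests in the unique factor containing it. This lifts the win from $u_1$ to $u$.

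The substantive case is $u = v^\theLO$, where $\dom{u} = \dom{v} \times T$ with $T$ of order type $\theLO = \omega + \zeta\cdot\eta + \omega^*$. Write $q_i = (p_i, t_i)$. The key structural property of $\theLO$ is that $T$ admits an accumulation coordinate for every possible type of infinite sequence of coordinates: the $\omega$-prefix and $\omega^*$-suffix absorb sequences escaping toward a boundary, while the dense middle $\zeta\cdot\eta$, whose $\eta$-factor realizes every Dedekind cut, absorbs sequences that oscillate or cluster inside. After subsequence extraction we may fix a coordinate $t \in T$ around which the $t_i$ cluster; applying the induction hypothesis to $v$ then supplies a candidate position $p \in \dom{v}$, and we set $q = (p, t)$.

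The main obstacle is showing that this single $q$ works for every quantifier depth $n$ simultaneously. For this one uses that $\Dwins{\Fragm_n}$ has only finitely many classes on $\Varset$-valuations, since at bounded quantifier depth only finitely many formulas exist up to logical equivalence over a fixed set of free variables; a diagonal extraction over $n$ therefore produces a single subsequence along which the $\Fragm_n$-type of $\enc{u,\alpha[x/q_i]}$ stabilizes for every $n$. Finally, the self-similarity of $\theLO$ — removing any finite set of coordinates from $T$ still leaves a sub-order containing $\theLO$ flanking $t$ — lets Duplicator stitch her $\Fragm_n$-strategies together copy-by-copy, matching the $v$-blocks around each $q_i$ to those around $q$ away from the finitely many coordinates touched by $\alpha$, and invoking the concatenation case to combine these local wins into a global one.
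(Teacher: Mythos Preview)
The paper does not prove this lemma in full; it cites \cite{HuschenbettKufleitner14stacs:short} and remarks that the argument runs parallel to the $\sigma$-case (Lemma~\ref{lemma:FO2_limit_points}), the simplification for $\theLO$ being its self-similarity: whenever $\theLO = \theLO_1 + 1 + \theLO_2$ with both factors infinite, one has $\theLO_1 \cong \theLO_2 \cong \theLO$. Your inductive skeleton (finite base case by pigeonhole, concatenation via a composition lemma, then the power case) is exactly this scheme, and you correctly name self-similarity as the key ingredient in the power case.

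Two points need repair. First, the claim that the ``$\eta$-factor realizes every Dedekind cut'' is false---$\eta$ is the order type of $\Rat$, and irrational cuts are not realized. You do not need this anyway: what the argument uses is only that cutting $\theLO$ at any point of the $\zeta\eta$-block leaves $\theLO$ on both sides, together with a plain case analysis on the sequence $(t_i)$ (constant, cofinal in the $\omega$- or $\omega^*$-part, or unbounded inside some $\zeta$-block) parallel to Cases~1--3 in the paper's proof of Lemma~\ref{lemma:FO2_limit_points}. Second, the diagonal-extraction paragraph is a detour that does not do the job you assign it. Stabilizing the $\Fragm_n$-type of $\enc{u,\repl\alpha x{q_i}}$ along a subsequence does not by itself produce a position $q$ realizing all the limit types---that would need a separate compactness argument you do not supply. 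In the paper's scheme no such step occurs: once $t$ and the infinite index set $J$ are fixed by the case analysis, the induction hypothesis on $v$ already yields a single $p$ working for \emph{every} $n$, and the lift from $v$ to $v^\theLO$ is then carried out $n$ by $n$ via self-similarity and the concatenation proposition, exactly as your final sentence sketches. So your construction of $q=(p,t)$ and your last sentence are the real proof; the finitely-many-types paragraph between them should be dropped.
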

\noindent
The key idea here is the following: by Theorem~\ref{thm:EF_bounded} there is
a winning strategy in the $\Fragm_n$-game for every $n \in \Nat$. Obviously, the winning 
strategy for the $\Fragm_k$-game is also a winning strategy for the 
$\Fragm_n$-game, if $k \geq n$. So, for every $n \in \Nat$ there is an infinite
number of winning strategies. Lemma~\ref{lemma:limit_points} 
states that these winning strategies can be combined into a single winning strategy for all $\Fragm_n$-games. This is the main ingredient for showing how the (finite) formulae in $\Fragm$ define a winning strategy in an infinite game; details to this are analogous to the results for two-variable logic which can be found below.

\section{Restriction on Two Variables}

Let $\FO^3$ be the subset of formulae in $\FO$ which only use variables from 
$\smallset{x, y, z}$ and let $\FO^2$ be the subset using only variables form
$\smallset{x, y}$.
It is well known that for every sentence in $\FO$ there is an equivalent 
sentence in $\FO^3$. The study of $\FO^2 = \FO^2[{<}]$, therefore, arises as a natural 
limitation.
For fragments $\Fragm \subseteq \FO^2[<]$ the order type $\theLO$ 
can be simplified to $\twoVarLO = \omega + \zeta + \omega^*$. As it will later on be 
necessary to distinguish the first $n$ and the last $n$ positions of 
$a^\sigma$ from the rest, we will say that the union of those positions forms 
the \emph{$n$-border}. Before we can go into detail there is a need for yet 
another definition. Let $p_1$ and $p_2$ be two elements of the same linear
ordering $(P, \leq_P)$; define
$$
  \ord(p_1, p_2) = \begin{cases}
    < & \text{if } p_1 <_P p_2\\
	= & \text{if } p_1 = p_2\\
	> & \text{otherwise, i.\,e.~} p_1 >_P p_2\text{.}
  \end{cases}
$$

The simplification from $\theLO$ to $\twoVarLO$ manifests in the following
variation of Theorem~\ref{thm:EF_unbounded}:
\begin{theorem}\label{thm:EF_FO2}
  Let $\Fragm \subseteq \FO^2[<]$ be a fragment.
  For all \twoVarLOText-rational words $u, v \in \words$ the following are 
  equivalent:
  \begin{enumerate}
  \item $u \models \varphi$ implies $v \models \varphi$ for all
    sentences $\varphi \in \Fragm$ and
  \item Duplicator has a winning strategy in the $\Fragm$-game on
    $(u,v)$.
  \end{enumerate}
\par
\end{theorem}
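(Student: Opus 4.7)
The plan is to follow the structure of the proof of Theorem~\ref{thm:EF_unbounded} and adapt it to the two-variable setting with the simpler order type $\twoVarLO$. The direction ``\refenum{2} $\bm\Rightarrow$ \refenum{1}'' is the easy one: a winning strategy for Duplicator in the $\Fragm$-game preserves the truth of every sentence $\varphi \in \Fragm$, because the strategy exhibits, round by round, a position in $v$ answering each existentially chosen position in $u$ (and vice versa). This argument does not depend on the rationality of the words nor on the restriction to two variables, and is essentially identical to the corresponding direction of Theorem~\ref{thm:EF_bounded}.

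For ``\refenum{1} $\bm\Rightarrow$ \refenum{2}'' the strategy is to mimic the proof of Theorem~\ref{thm:EF_unbounded}, but replace Lemma~\ref{lemma:limit_points} by a two-variable version. The lemma to establish should read: for every fragment $\Fragm \subseteq \FO^2[{<}]$, every \twoVarLOText-rational word $u \in \words$, every $\oneset{y}$-valuation $\enc{u,\alpha}$, and every sequence $(q_i)_{i\in\Nat} \in \dom{u}^\Nat$, there is a position $q \in \dom{u}$ such that for every $n \in \Nat$ arbitrarily large $i$ satisfy $\enc{u,\repl\alpha x{q_i}} \Dwins{\Fragm_n} \enc{u,\repl\alpha x q}$. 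Assuming \refenum{1}, Theorem~\ref{thm:EF_bounded} yields a winning strategy for Duplicator in each $\Fragm_n$-game on $(u,v)$; the limit lemma then combines these into a single strategy for the infinite $\Fragm$-game by selecting, at each round, a response that is simultaneously compatible with all but finitely many $\Fragm_n$-strategies. Because $\FO^2$-games overwrite the value of a variable on every move, the game state at any time involves at most two marked positions, so the limit argument only ever needs to find limit positions relative to a single reference position.

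The core of the proof is the \twoVarLOText-version of the limit lemma, which I would establish by structural induction on the \twoVarLOText-rational construction of $u$. The base case of a finite word is trivial: the sequence $(q_i)$ must repeat some position, which serves as $q$. For a concatenation $u = u_1 u_2$, the sequence either has infinitely many entries in one of the two factors, reducing to the inductive hypothesis (after splitting $\alpha$ accordingly), or the relevant side becomes irrelevant. For a \twoVarLOText-power $w^\twoVarLO$, a Ramsey-type uniformity on the finite set of $\Fragm_n$-types of $\oneset{x}$-valuations lets me assume that within each of the $\omega$-, $\zeta$-, and $\omega^*$-parts the positions share the same $n$-type relative to the (at most) one $y$-position of $\alpha$; the subsequence must cluster in one of these three parts, and a generic representative of that part can be taken as the limit $q$. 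Crucially, because only a single reference position enters the argument, one $\zeta$-block per \twoVarLOText-power suffices, and the $\eta$-density of $\theLO$ -- needed in the general case to separate a new position from several simultaneously marked ones -- is not required.

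The main obstacle is precisely this inductive step for the \twoVarLOText-power: one must argue that positions within, say, the $\zeta$-part of $w^\twoVarLO$ can all be represented by a single generic position without losing any $\Fragm_n$-information visible to the at-most-one reference position in $\alpha$, and that Duplicator's responses are still consistent across all $\Fragm_n$-games simultaneously. This is the step that genuinely exploits both the two-variable restriction (which bounds the available context) and the specific structure of $\twoVarLO$ (which provides unbounded-left, unbounded-right, and two-sided-unbounded regions inside every iterated block). Once this lemma is in place, assembling the infinite winning strategy is a diagonal-style iteration that follows verbatim the argument sketched after Lemma~\ref{lemma:limit_points}.
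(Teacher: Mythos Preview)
Your overall architecture matches the paper exactly: direction ``\refenum{2} $\bm\Rightarrow$ \refenum{1}'' is the standard implication, and ``\refenum{1} $\bm\Rightarrow$ \refenum{2}'' is obtained by proving a two-variable limit lemma (the paper's Lemma~\ref{lemma:FO2_limit_points}) by structural induction on the \twoVarLOText-rational build-up of $u$, and then running the ``good configuration'' diagonal argument. Your base case and concatenation case are the same as the paper's.

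Where you diverge is in the $\twoVarLO$-power step. You propose a Ramsey-type argument on $\Fragm_n$-types, but note that the limit point $q$ must be chosen \emph{once} and then work for \emph{every} $n$; pigeonholing on $\Fragm_n$-types gives you, for each $n$, an infinite monochromatic subsequence, but the colour class --- and hence the candidate $q$ --- may change with $n$. The paper avoids this by isolating a separate combinatorial lemma (Lemma~\ref{lemma:nborder}): if two $\Varset$-valuations on $v^\twoVarLO$ agree on their $\dom{v}$-components up to $\Dwins{\Fragm_n}$, agree on the $n$-border, and preserve the order of the outer coordinates, then they are $\Dwins{\Fragm_n}$-related. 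With this in hand, the $\twoVarLO$-power step becomes a case analysis on where the outer coordinates $p_i$ of $q_i = (s_i,p_i)$ accumulate (a fixed value, unbounded in $\Nat$ or $-\Nat$, or unbounded in $\Int$); in each case one picks a single outer coordinate $p$ independent of $n$ (typically $0$ in the $\Int$-part, or $p_y \pm 1$), and then applies the inductive hypothesis to the $\dom{v}$-components $(s_i)$ to obtain $s$, setting $q=(s,p)$. The $n$-dependence is pushed entirely into the index set $I$.

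So your plan is correct in outline and your diagnosis of why one $\zeta$-block suffices is spot on, but the mechanism you sketch for the power case would not, as stated, produce an $n$-uniform $q$. The paper's $n$-border lemma is precisely the missing ingredient that decouples the choice of $q$ from $n$.
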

\noindent
While the proof for the implication ``\refenum{2} $\bm\Rightarrow$ \refenum{1}''
is analogous to Theorem~\ref{thm:EF_unbounded}, the proof for the implication
``\refenum{1} $\bm\Rightarrow$ \refenum{2}'' differs slightly. It heavily relies
on the following lemma:
\begin{lemma}\label{lemma:nborder}
  Let $\Fragm \subseteq \FO^2[<]$ be a fragment 
  and let $\enc{u, \alpha}$ and $\enc{u, \beta}$ be $\Varset$-valuations on $u = v^\twoVarLO$ for a non-empty \twoVarLOText-rational word $v \in 
  \words$ and $\Varset \subseteq \smallset{x, y}$. For $\gamma \in 
  \smallset{\alpha, \beta}$ and $z \in \Varset$ let $\gamma(z) = (s_z^\gamma, 
  p_z^\gamma)$, and define $\gamma_v: \Varset \to \dom{v}$ via $\gamma_v(z) = 
  s_z^\gamma$.
  Suppose for $n \in \mathbb{N}$ the following three conditions hold:
  \begin{enumerate}
    \item $\enc{v, \alpha_v} \Dwins\Fragm_n \enc{v, \beta_v}$.
    \item If $p_z^\gamma$ is in the $n$-border for $\gamma \in \smallset{ \alpha, \beta}$ and $z \in 
      \Varset$, then $p_z^\alpha = p_z^\beta$.
    \item $\ord(p_x^\alpha, p_y^\alpha) = \ord(p_x^\beta, p_y^\beta)$ if 
      $\Varset = \smallset{x, y}$.
  \end{enumerate}
  Then we have $\enc{u, \alpha} \Dwins\Fragm_n \enc{u, \beta}$.
\end{lemma}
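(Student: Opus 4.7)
My approach is to argue by induction on $n$, with the fragment $\Fragm$ universally quantified at each step. In the base case $n = 0$, $\Fragm_0$ contains only literals, so it will suffice to show that every literal $\varphi \in \Fragm_0$ with $\fv(\varphi) \subseteq \Varset$ satisfied by $\enc{u, \alpha}$ is also satisfied by $\enc{u, \beta}$. Label literals $\lambda(z) = a$ depend only on the $v$-coordinate $s_z^\gamma$, and condition~1 will transfer them. Equality and order literals compare two positions in $\dom{u} = \dom{v} \times T$ lexicographically with the $T$-coordinate primary, so I would combine condition~3 (primary coordinate, when $\smallset{z_1, z_2} = \smallset{x, y}$) with condition~1 (secondary coordinate, in case of a primary tie).

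For the inductive step, assuming the lemma for $n-1$ and all fragments, I will describe a strategy for Duplicator in the $\Fragm_n$-game on $(u, \alpha, \beta)$. In a round, Spoiler picks $\quant \in \Quantifiers$, a variable $z \in \smallset{x, y}$, and a quest. By a suitable symmetry---the four quantifier cases differ only by swapping the two sides of the game or swapping $\alpha$ with $\beta$, conditions~2 and~3 are manifestly symmetric in $\alpha, \beta$, and condition~1 propagates through the continuation of the $v$-game's winning strategy---I only need to treat $\quant = \exists$. So let Spoiler's quest be $q = (s_q, p_q) \in \dom{u}$ and let $w$ denote the other variable in $\smallset{x, y} \setminus \smallset{z}$. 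Duplicator's response will be $r = (s_r, p_r)$, where $s_r$ is obtained from Duplicator's winning strategy in the $\Fragm_n$-game on $(v, \alpha_v, \beta_v)$ against Spoiler's simulated move $\exists z$ with quest $s_q$, and $p_r$ is chosen as follows: if $p_q$ lies in the $(n-1)$-border of $\twoVarLO$, I set $p_r = p_q$; otherwise I pick $p_r$ outside the $(n-1)$-border with $\ord(p_r, p_w^\beta) = \ord(p_q, p_w^\alpha)$ (any such $p_r$ if $w \notin \Varset$).

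I will then verify the three hypotheses of the lemma for $\alpha' = \repl\alpha z q$, $\beta' = \repl\beta z r$, the reduct $\dis\exists z \Fragm$, and parameter $n - 1$: condition~1 follows by construction of $s_r$; condition~2 for $w$ is inherited since the $(n-1)$-border sits inside the $n$-border, and for $z$ is ensured by the case split; condition~3 is direct when $p_q$ is outside the $(n-1)$-border, and when $p_q$ is inside, $p_r = p_q$ and a short case analysis on whether $p_w^\alpha, p_w^\beta$ lie in the $n$-border (and are thus forced equal by condition~2) or in the interior shows the required order identity. Invoking the IH then supplies Duplicator's strategy for the continuation. The main obstacle will be verifying that the interior choice of $p_r$ is always feasible: the set of positions outside the $(n-1)$-border itself has order type $\twoVarLO$ and hence is unbounded in both directions, and the potentially degenerate cases that would require $p_r$ strictly below (or above) a $p_w^\beta$ near the corresponding boundary are ruled out by combining $p_q \notin (n-1)$-border with condition~2 applied to $p_w^\alpha$.
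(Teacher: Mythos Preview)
Your proposal is correct and follows essentially the same approach as the paper: induction on $n$, with the $v$-coordinate of Duplicator's reply taken from the winning strategy in the $\Fragm_n$-game on $v$, and the $\twoVarLO$-coordinate chosen by a border-based case analysis so that conditions~1--3 persist at level $n-1$. Your case split (first on whether $p_q$ lies in the $(n{-}1)$-border, then feasibility of an interior $p_r$ with the right order) is a mild reorganization of the paper's split (first on whether $p_y^\beta$ lies in the $n$-border, then on whether $p$ lies in the $(n{-}1)$-border, else pick a neighbor of $p_y^\beta$), but the underlying verification is the same.
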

\begin{proof}
  The proof is given by induction on $n$.
  
  \resetcases\basecase{$n = 0$} As $\dom{u}$ is infinite, $\enc{u, \alpha} 
  \Dwins\Fragm_0 \enc{u, \beta}$ holds if $\enc{u, \alpha} \models \varphi$ 
  implies $\enc{u, \beta} \models \varphi$ for any literal $\varphi \in \Fragm$.
  One should note that, obviously, $\enc{u, \alpha}$ can only be a model for a 
  literal, if $\alpha$ -- and thereby also $\beta$ -- is defined on the 
  variables appearing in the literal. 
  If $\enc{u, \alpha} \models x = y$ then $s_x^\alpha = s_y^\alpha$ and 
  $p_x^\alpha = p_y^\alpha$. This implies $s_x^\beta = s_y^\beta$ by \refenum{1}
  and $p_x^\beta = p_y^\beta$ by \refenum{3} For $x < y$ and $x \leq y$ one
  has to distinguish the cases $p_x^\alpha = p_y^\alpha$ and $p_x^\alpha < 
  p_y^\alpha$, but in both cases it is easy to verify that $\enc{u, \beta}
  \models x < y$, or $x \leq y$ respectively, holds. $\enc{u, \beta} \models
  \lambda(x) = a$ follows for $\enc{u, \alpha} \models \lambda(x) = a$ directly
  via \refenum{1} and $u(\alpha(x)) = v(s_x^\alpha)$. 
  
  \inductivestep{$n > 0$} Without loss of generality, let Spoiler choose 
  variable $x$, quest $q = (s, p) \in \dom{u} = \dom{v} \times \left( \Nat 
  \uplus \Int \uplus -\Nat \right)$ and quantifier $Q \in \smallset{
  \exists, \forall, \neg\exists, \neg\forall}$. The proof for $Q \neq \exists$ 
  is either identical or symmetric to the one for 
  $Q = \exists$, as one can easily verify. Therefore, let $Q = \exists$. If $y 
  \not\in \Varset$, Spoiler's answer can be equal to $q$ and we are done. Thus,
  it is safe to assume $y \in \Varset$. By
  \refenum{1}~there has to be $s' \in \dom{v}$ such that $\enc{v, 
  \repl{\alpha_v}{x}{s}} \Dwins{(\dis\exists x\Fragm)_{n - 1}} \enc{v, 
  \repl{\beta_v}{x}{s'}}$ holds. Spoiler's answer $(s', r)$ depends
  on the value of $p_y^\beta$. In all cases, $r$ should be chosen in such a way
  that $\ord(p, p_y^\alpha) = \ord(r, p_y^\beta)$ is maintained and that $r$ 
  being in the $(n-1)$-border implies $r = p$. If these conditions are 
  satisfied, induction can be applied on $\repl{\alpha}{x}{q}$ and 
  $\repl{\beta}{x}{(s', r)}$ yielding $\enc{u, \repl{\alpha}{x}{q}} 
  \Dwins{(\dis\exists x\Fragm)_{n-1}} \enc{u, \repl{\beta}{y}{(s', r)}}$ which
  concludes the proof.
  
  If $p_y^\beta$ is in the $n$-border choose $r := p$. This maintains the 
  ordering, as $\ord(p, p_y^\alpha) = \ord(r, p_y^\beta)$ by \refenum{2} 
  If $p_y^\beta$ is not in the $n$-border, but $p$ is in the $(n - 1)$-border,
  the choice of $r := p$ trivially maintains the ordering, as well. Finally, if 
  neither $p_y^\beta$ is in the $n$-border nor $p$ in the $(n - 1)$-border,
  choose $r$ as the predecessor or successor of $p_y^\beta$ or equal to 
  $p_y^\beta$ depending of the ordering of $p$ and $p_y^\alpha$. By this choice 
  $r$ cannot be in the $(n - 1)$-border.
\end{proof}

Next, we restate a proposition of \cite{HuschenbettKufleitner14stacs:short} in
a simplified version. For the sake of brevity, its proof is omitted.
\begin{restatable}{proposition}{PropositionConcatenation}
\label{prop:concatenation}
Let $\Fragm \subseteq \FO^2[<]$ be a fragment, and let $\Varset_1, \Varset_2 \subseteq 
\smallset{x, y}$ be two mutually disjoint sets of variables. Furthermore, for
$i = 1, 2$ let
$\enc{u_i,\alpha_i},\enc{v_i,\beta_i}$ be $\Varset_i$-valuations with
$\enc{u_i,\alpha_i} \Dwins\Fragm \enc{v_i,\beta_i}$. 
Then $\enc{u_1 u_2,\alpha_1 \cup \alpha_2} \Dwins\Fragm \enc{v_1 v_2,\beta_1 
\cup \beta_2}$ holds.  \qed
\end{restatable}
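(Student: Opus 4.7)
My plan is to construct a winning strategy for Duplicator in the combined $\Fragm$-game on $(u_1 u_2, v_1 v_2)$ by running Duplicator's two given sub-game strategies on $(u_1, v_1)$ and $(u_2, v_2)$ in parallel. Throughout the play, Duplicator will maintain the invariant that every currently-assigned variable lies in the same component on both sides, i.e., $\alpha(z) \in \dom{u_i}$ if and only if $\beta(z) \in \dom{v_i}$, and that for each $i \in \{1,2\}$ the sub-configuration obtained by restricting the valuations to the variables presently located in component $i$ is a Duplicator-winning position of the appropriate fragment-game on $(u_i, v_i)$. The initial hypothesis gives exactly this, with $\Varset_i$ sitting inside component $i$.

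In a single round, Spoiler picks a quantifier $Q$, a variable $z$, and a quest $q$; the quest lies in a unique component $i$, so Duplicator relays $(Q, z, q)$ to sub-game $i$'s winning strategy, obtains a response $r$ inside the corresponding side of component $i$, and plays $r$ in the combined game. This puts $z$ into component $i$ on both sides, preserving component consistency, and sub-game $i$'s strategy keeps its sub-configuration winning. Literals that Spoiler might eventually appeal to fall into two classes. \emph{Intra-component} literals are handled by the corresponding sub-game's winning invariant. \emph{Cross-component} literals, such as $x < y$ with $x$ in component $1$ and $y$ in component $2$, are automatic on both sides, because every position of $u_1$ precedes every position of $u_2$ in $u_1 u_2$ (and similarly on the right), and the component-consistency invariant forces both valuations to agree on where each variable lives.

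The main technical obstacle is keeping sub-game $3-i$'s invariant intact after a round played in sub-game $i$. Two things can go wrong: if Spoiler reassigned $z$ from component $3-i$ to component $i$, sub-game $3-i$ loses $z$ from its valuations; and sub-game $3-i$'s active reduct of $\Fragm$ must track the combined game's reduct even though no Spoiler move occurred there. Both issues reduce to a single auxiliary monotonicity fact about fragment-games that I would isolate first: a Duplicator-winning position remains winning after dropping variables from the valuations and/or passing to any iterated reduct of the fragment. The strategy witnessing this is to simulate the original game while pretending the dropped variables still hold their old values and that the suppressed quantifier prefix was played by a fixed Duplicator-chosen completion; any literal that could defeat Duplicator in the smaller game has free variables already in the smaller valuation, so it would already witness Spoiler's win in the larger game---a contradiction.

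Once this monotonicity lemma is in place, the parallel simulation proceeds uniformly: if the combined game ever halts because Spoiler has no available quantifier, the same holds in both sub-games, and an infinite play is automatically won by Duplicator. The construction therefore yields $\enc{u_1 u_2, \alpha_1 \cup \alpha_2} \Dwins{\Fragm} \enc{v_1 v_2, \beta_1 \cup \beta_2}$, as required.
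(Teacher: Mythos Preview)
The paper does not actually prove this proposition; it restates a result from~\cite{HuschenbettKufleitner14stacs:short} and explicitly omits the proof ``for the sake of brevity.'' So there is no in-paper argument to compare against, and I can only assess the correctness of your sketch.

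Your parallel-simulation approach is the standard one and is essentially correct. The invariant (component consistency plus ``each restricted sub-configuration is Duplicator-winning for the current reduct'') is the right thing to maintain, and your treatment of intra- versus cross-component literals is fine. The monotonicity lemma you isolate is also the right auxiliary fact, and your justification---simulate the suppressed quantifier by a dummy Spoiler move and use Duplicator's old strategy---works in the main case. Two points deserve more care in a full write-up. First, for the negated quantifiers $\neg\exists$ and $\neg\forall$ the follow-up configuration swaps the two words (see Table~\ref{tab:masterplan}), so your simulated move in sub-game $3{-}i$ must perform the same swap; this is routine but should be said. Second, your simulation of the suppressed quantifier needs a position in $\dom{u_{3-i}}$ (or $\dom{v_{3-i}}$) on which to place the fake quest; if that domain is empty the simulation step is unavailable. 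This edge case does not break the proposition, but it does break your argument as written: you need a separate (easy) treatment of the situation where one factor is the empty word, e.g.\ by observing that then the corresponding $\Varset_i$ is empty and either the other side of that factor is empty too (and the claim degenerates to the other sub-game) or the hypothesis $\enc{\epsilon,\emptyset}\Dwins\Fragm\enc{v_i,\emptyset}$ forces the relevant quantifier types to be absent from $\Fragm$ at that level.
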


This allows for the proof of a lemma which is in many aspects analogous to
Lemma~\ref{lemma:limit_points}.
\begin{lemma}\label{lemma:FO2_limit_points}
  Let $\Fragm \subseteq \FO^2[<]$ be a fragment
  and let $\enc{u, \alpha}$ be an $\Varset$-valuation on a 
  \twoVarLOText-rational word $u \in \words$ with $\Varset \subseteq \smallset{
  x, y}$.
  For every infinite sequence $(q_i)_{i \in \Nat} \in \dom{u}^\Nat$ there is a
  position $q \in \dom{u}$ such that for every $n \in \Nat$ there is an infinite
  set $I \subseteq \Nat$ with $\enc{u, \repl{\alpha}{x}{q_i}} \Dwins{\Fragm_n}
  \enc{u, \repl{\alpha}{x}{q}}$ for every $i \in I$.
\end{lemma}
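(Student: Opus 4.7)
The plan is to prove the lemma by structural induction on the construction of $u$ as a $\sigma$-rational word, distinguishing three cases: $u$ is finite, $u = u_1 u_2$ is a concatenation of $\sigma$-rational words, and $u = v^\sigma$ for a $\sigma$-rational $v$. Throughout I write $q_i$ for the $i$-th term of the given sequence and, if $y \in \Varset$, I abbreviate $\alpha(y) = p_y$.

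In the base case $\dom{u}$ is finite, so pigeonhole supplies some $q$ with $q_i = q$ on an infinite set $I$, and reflexivity of $\Dwins{\Fragm_n}$ closes every $n$ at once. For the concatenation $u = u_1 u_2$ I pass to a subsequence with all $q_i \in \dom{u_1}$ (or all in $\dom{u_2}$), split $\alpha = \alpha_1 \cup \alpha_2$ according to where its values lie, apply the inductive hypothesis to $u_1$ with valuation $\alpha_1$ and sequence $(q_i)$ to obtain a limit $q \in \dom{u_1}$ together with sets $I_n$ for every $n$, and finally invoke Proposition~\ref{prop:concatenation} together with the trivial reflexive $\Dwins{\Fragm_n}$ on $\enc{u_2, \alpha_2}$ to lift the conclusion to the whole word.

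The power case $u = v^\sigma$ is where the real work sits. Writing $q_i = (s_i, t_i) \in \dom{v} \times T$ for $T$ of type $\sigma = \omega + \zeta + \omega^*$, I first pigeonhole on which of the three summands of $\sigma$ contains $t_i$. If the $t_i$ are eventually bounded inside the $\omega$- or $\omega^*$-part, a further pigeonhole pins down a constant value $t$ and I take the subsequence with $t_i = t$. Otherwise I pass to a subsequence with $t_i$ escaping to infinity in the $\omega$- or $\omega^*$-part or lying in the $\zeta$-part, and pick $t$ somewhere in the $\zeta$-part. If $y \in \Varset$ and $p_y = (s_y, t_y)$ with $t_y$ in the $\zeta$-part, one extra pigeonhole on the value of $\ord(t_i, t_y)$ makes it possible to choose $t$ in the $\zeta$-part with the matching order. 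Applying the inductive hypothesis to $v$ with sequence $(s_i)$ and valuation $[y \mapsto s_y]$ (or the empty valuation) produces a limit $s \in \dom{v}$ and, for every $n$, an infinite set $I_n$ witnessing hypothesis~\refenum{1} of Lemma~\ref{lemma:nborder}; setting $q = (s, t)$ and trimming $I_n$ to discard the finitely many indices with $t_i$ in the $n$-border and $t_i \ne t$ secures hypotheses~\refenum{2} and~\refenum{3}, and the lemma then delivers $\enc{u, \repl{\alpha}{x}{q_i}} \Dwins{\Fragm_n} \enc{u, \repl{\alpha}{x}{q}}$.

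The main obstacle is that the single limit point $q$ has to be chosen without reference to $n$, even though the $n$-border in $T$ grows with $n$. The key observation that dissolves this is that the $\zeta$-summand of $\sigma$ meets no $n$-border at all, so placing $t$ there makes hypothesis~\refenum{2} automatic for sufficiently large indices, and is precisely what lets the three-summand order $\sigma = \omega + \zeta + \omega^*$ suffice for fragments of $\FO^2[<]$.
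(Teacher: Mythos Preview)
Your proposal is correct and follows essentially the same route as the paper: structural induction on the $\sigma$-rational construction of $u$, with the finite and concatenation cases handled by pigeonhole and Proposition~\ref{prop:concatenation}, and the power case $u=v^\sigma$ reduced to Lemma~\ref{lemma:nborder} after pigeonholing on the $T$-coordinate and applying the induction hypothesis to $v$. The only cosmetic difference is the organisation of the case split in the power step: the paper distinguishes (i) a constant $p_i$, (ii) an unbounded subsequence in the $\omega$- or $\omega^*$-part, and (iii) an unbounded subsequence in the $\zeta$-part, and in each case writes down an explicit $p$ depending on where $p_y$ sits, whereas you phrase the same analysis as ``bounded vs.\ escaping'' together with an extra pigeonhole on $\ord(t_i,t_y)$; both arrive at the same limit point and the same per-$n$ trimming of $I_n$.
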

\begin{proof}
  A position $q$ with the above property is called a \emph{$\enc{u, \alpha}$-limit point} of the 
  sequence $(q_i)_{i \in \Nat}$. To proof the lemma we have to show the existence of such a $\enc{u,
  \alpha}$-limit point for any sequence $(q_i)_{i \in \Nat} \in \dom{u}^\Nat$. Without loss of generality
  we can assume that $x \not\in \Varset$, because neither $\repl\alpha x {q_i}$ nor $\repl\alpha x q$
  depends on the value of $\alpha(x)$.
  
  The rest of the proof is given by induction on the \twoVarLOText-rational 
  construction of $u$.

  \resetcases \basecase{$u$ is finite} Trivially, we have that $\dom{u}$ is also finite, and that there is
  a $q \in \dom{u}$ such that $q = q_i$ for infinitely many $i \in \Nat$. This $q$ is a 
  $\enc{u,\alpha}$-limit point of $(q_i)_{i \in \Nat}$.

  \lpara{Inductive step 1}{$u = v_1 v_2$ with $\twoVarLO$-rational words $v_1,v_2$}
  We can split the valuation $\enc{u, \alpha}$ into two valuations $\enc{v_1, \beta_1}$ and 
  $\enc{v_2, \beta_2}$ with $\alpha = \beta_1 \cup \beta_2$. For $\ell = 1$ or $\ell = 2$ there is a set
  $I \subseteq \Nat$ such that $q_i \in \dom{v_\ell}$ for all $i \in I$. By induction, the subsequence $(q_i)_{i \in
  I}$ has a $\enc{v_\ell, \beta_\ell}$-limit point, and, by Proposition~\ref{prop:concatenation} this limit
  point is also a $\enc{u, \alpha}$-limit point of $(q_i)_{i \in \Nat}$.
  
  \lpara{Inductive step 2}{$u = v^\twoVarLO$ for a \twoVarLOText-rational word $v$} As $\dom{u} = \dom{v} \times 
  \left( \Nat \uplus \Int \uplus -\Nat \right)$, let $q_i =: (s_i, p_i)$ for
  $i \in \Nat$. Define $\alpha_v: \Varset \to \dom{v}$ as the restriction of 
  $\alpha$ to values in $\dom{v}$ by discarding the ($\Nat \uplus \Int \uplus 
  -\Nat$)-part. In each of the following cases there is a definition of a value
  $p$ and an infinite index set $J$ such that induction yields a $\enc{v, 
  \alpha_v}$-limit point $s$ of the sequence $(s_i)_{i \in J}$. Define $I 
  \subseteq J$ as the infinite set of indices $i$ with $\enc{v, 
  \repl{\alpha_v}{x}{s_i}} \Dwins{\Fragm_n} \enc{v, \repl{\alpha_v}{x}{s}}$ and 
  define $q$ as $(s, p)$. One should observe that in all cases $p$ is chosen in 
  a way that satisfies the conditions of Lemma~\ref{lemma:nborder}. Thus, 
  $\enc{u, \repl{\alpha}{x}{(s_i, p_i)}} \Dwins{\Fragm_n} \enc{u, 
  \repl{\alpha}{x}{(s, p)}}$ follows for all $i \in I$.
  
  \resetcases\case{There is $p$ such that $p = p_i$ for infinitely many $i$}
  Let $J$ be the infinite set of these $i$.
  
  \case{There is an unbounded subsequence of $(p_i)_{i \in \Nat}$ with values in
  $\Nat$ or one with values in $-\Nat$} The argumentation for values in $-\Nat$
  is dual to the one with values in $\Nat$. Therefore, demonstration is only 
  given for the $\Nat$-case. If $\alpha(y) = (s_y, p_y)$ is defined and if $p_y$
  is in the $\Nat$-part, let $m$ be the maximum of $n$ and $p_y$, and let 
  $p = 0$ in the $\Int$-part. If $p_y$ is in the $\Int$-part, let $m = n$ and $p = p_y - 1$. 
  Otherwise, let $m = n$ and $p = 0$ in the $\Int$-part. Let $J$ be the infinite set of those 
  indices $i$ of the unbounded subsequence with $p_i > m$. Special note shall 
  be given to the fact that, by choice, the ordering of $p_i$ and $p_y$ is equal 
  to the one of $p$ and $p_y$ for all $i \in J$.
  
  \case{There is an unbounded subsequence of $(p_i)_{i \in \Nat}$ with values
  in $\Int$} By duality one can assume that the subsequence is upwards 
  unbounded. If $\alpha(y) = (s_y, p_y)$ is defined and $p_y$ is in the 
  $\Int$-part, let $p = p_y + 1$. Otherwise, choose $p$ as the $0$ of the 
  $\Int$-part. Finally, choose $J$ as the infinite set of all indices $i$ in the
  upwards unbounded subsequence with $p_i \geq p$. Again, this choice ensures
  that the ordering of $p_i$ and $p_y$ is equal to the one of $p$ and $p_y$ for
  all $i \in J$.
\end{proof}

One of the major differences between Lemma~\ref{lemma:FO2_limit_points} and Lemma~\ref{lemma:limit_points} is that in the latter case one can take a more local view since if we factorize $\varrho = \varrho_1 + 1 + \varrho_2$ with   infinite factors $\varrho_1$ and $\varrho_2$, then both $\varrho_1$ and $\varrho_2$ are isomorphic to $\varrho$. We cannot apply this simplification here in Lemma~\ref{lemma:FO2_limit_points}.

With this lemma in place, proving the implication ``\refenum{1} 
$\bm\Rightarrow$ \refenum{2}'' of Theorem~\ref{thm:EF_FO2} is a simple matter of 
adapting the corresponding parts of the proof for Theorem~\ref{thm:EF_unbounded}.
\begin{proof}[Proof of Theorem~\ref{thm:EF_FO2}, ``\refenum{1} $\bm\Rightarrow$ \refenum{2}''.]
  We call a configuration $(\FragmG,\enc{u,\alpha},\enc{v,\beta})$ of the 
  $\Fragm$-game on $(u,v)$ \emph{good} for Duplicator if
  $\enc{u,\alpha} \Dwins{\FragmG_n} \enc{v,\beta}$ for every $n \in
  \Nat$. Good configurations especially satisfy $\enc{u,\alpha} \Dwins{\FragmG_0}
  \enc{v,\beta}$, so Spoiler does not win immediately in good configurations.
  Thus, staying in good configurations forms a winning strategy for Duplicator.
  By Theorem~\ref{thm:EF_bounded} the initial configuration 
  $(\Fragm,u,v)$ is good. 

  Now suppose we are in the good configuration $(\FragmG, \enc{u, \alpha}, \enc{v, \beta})$ and 
  Spoiler chooses quantifier $\quant$, variable $x$ and quest $q$. Now, Duplicator has to choose 
  such an answer that the follow-up configuration is good again. This choice, obviously, depends on 
  $\quant$; here, we only demonstrate the case for $\quant = \exists$, as the other cases are 
  similar. For $\quant = \exists$ Spoiler's quest $q$ has to be in $\dom{u}$. Because we are in a 
  good configuration, we have $\enc{u, \alpha} \Dwins{\FragmG_{n+1}} \enc{v, \beta}$ for every $n
  \in \Nat$. This means that for every $n \in \Nat$ Duplicator has an answer $r_n \in \dom{v}$ such 
  that $\enc{u, \repl\alpha x q} \Dwins{\dis\exists x{\FragmG_{n+1}}} \enc{v, \repl\beta x{r_n}}$ holds.
  Because of $\dis\exists x{\FragmG_{n + 1}} = (\dis\exists x\FragmG)_n$ this is equivalent to
  $\enc{u, \repl\alpha x q} \Dwins{(\dis\exists x\FragmG)_n} \enc{v, \repl\beta x{r_n}}$. 
  Lemma~\ref{lemma:FO2_limit_points} yields an $r \in \dom{v}$ such that for every $n \in \Nat$
  there is an infinite set $I \subseteq \Nat$ with $\enc{v, \repl\alpha x{r_i}} \Dwins{(\dis\exists x
  \FragmG)_n} \enc{v, \repl\alpha x r}$ for all $i \in I$. Choosing this $r$ as Duplicator's answer 
  results in a good configuration: for every $n \in \Nat$ there is an $i \geq n$ with $\enc{v, \repl\beta 
  x{r_i}} \Dwins{(\dis\exists x\FragmG)_n} \enc{v,\repl\beta x r}$ and $\enc{u, 
  \repl\alpha x q} \Dwins{(\dis\exists x\FragmG)_i} \enc{v,\repl\beta x{r_i}}$ holds by choice of $r_i$.
  This implies $\enc{u, \repl\alpha x q} \Dwins{(\dis\exists x\FragmG)_n} \enc{v,\repl\beta x{r_i}}$ 
  and, in turn, $\enc{u, \repl\alpha x q} \Dwins{(\dis\exists x\FragmG)_n} \enc{v, \repl\beta x r}$.
\end{proof}

\subparagraph{Above $\FO^2$ and below $\sigma$.}
While it is sufficient to consider \twoVarLOText-rational words for fragments
$\Fragm \subseteq \FO^2[<]$, this restriction in 
consideration is insufficient for $\FO^3$ and, hence, for $\FO$. To see this, 
consider $u = a^\sigma$ and $v = a^\sigma a^\sigma = a^{\omega + \zeta + \zeta 
+ \omega^*}$ for an $a \in \Lambda$. These two words cannot be distinguished by 
any formula in $\FO$, 
since there is a winning strategy for Duplicator in the $\FO_n$-game for every 
$n \in \Nat$. Therefore, they especially cannot be distinguished by any formula 
in $\FO^3$. On the other hand, there is a winning strategy for Spoiler in the 
infinite $\FO^3$-game on $(u, v)$: In the first round, Spoiler can choose 
variable $x$, quantifier $\forall$ and a quest in the first $\Int$-part of 
$\dom{v}$. Duplicator's answer has to be in the $\Int$-part of $\dom{u}$, 
since otherwise there would be only finitely many positions left on one side 
which would allow Spoiler to win in finitely many rounds. In the second round,
Spoiler can choose variable $y$, quantifier $\forall$ and a quest in the second
$\Int$-part of $\dom{v}$. Again, Duplicator's answer needs to be in the 
$\Int$-part of $\dom{u}$ for the same reason as before. But: this leaves only
finitely many positions between Duplicator's first and second answer in 
$\dom{u}$, while there still are infinitely many positions between Spoiler's two
quests. Therefore, Spoiler can win in finitely many rounds.

On the other hand, one might assume that it is possible to further restrict 
$\twoVarLO$ to $\twoVarLO' = \omega + \omega^*$, the order type of $a^\omega
a^{\omega^*}$ for $a \in \Lambda$, in the case of fragments $\Fragm \subseteq 
\FO^2[<]$. But this is not the case, as can be 
seen when considering the words $u = a^{\twoVarLO'}$ and $v = 
a^{\twoVarLO'}a^{\twoVarLO'} = a^{\omega + \zeta + \omega^*}$. In $\dom{v}$
there are positions which are infinitely far away from both ends of the word,
namely the positions belonging to the $\Int$-part, but there are no such
positions in $\dom{u}$. This constitutes a winning strategy for Spoiler in the
$\FO^2$-game on $(u, v)$.

\enlargethispage{2\baselineskip}

\subparagraph{Applications.}

Let $\Fragm \subseteq \FO^2[<]$ be a fragment and let $s,t \in
\terms$. One of the main applications of Theorem~\ref{thm:EF_FO2} is the equivalence of the following two properties:
\begin{enumerate}
\item The identity $s=t$ holds in the syntactic monoid of every
  language definable in $\Fragm$.
\item Duplicator has winning strategies in the $\Fragm$-games on 
  $(\evalt s\sigma,\evalt t\sigma)$ and $(\evalt t\sigma,\evalt
  s\sigma)$.
\end{enumerate}
Here, $\evalt s\sigma$ is the generalized word obtained by replacing $\pi$ by $\sigma$. The proof is identical to the corresponding statement in~\cite{HuschenbettKufleitner14stacs:short}; a definition of the syntactic monoid can also be found in this paper.

{\small
\newcommand{\Ju}{Ju}\newcommand{\Ph}{Ph}\newcommand{\Th}{Th}\newcommand{\Ch}{Ch}\newcommand{\Yu}{Yu}\newcommand{\Zh}{Zh}\newcommand{\St}{St}\newcommand{\curlybraces}[1]{\{#1\}}

}

\end{document}